\documentclass[
final
]{dmtcs-episciences}        
\usepackage{graphics, amsthm, amsmath, amssymb, algorithm, algorithmic}
\newtheorem{theorem}{Theorem}

\newtheorem{lemma}{Lemma}
\newtheorem{corollary}{Corollary}

\newtheorem{property}{Property}

\author{Jessica Enright\affiliationmark{1}
	\and Lorna Stewart\affiliationmark{2}}
\title{Equivalence of the filament and overlap graphs of subtrees of limited trees}

\affiliation{
	University of Stirling, Stirling, Scotland, UK\\
	University of Alberta, Edmonton, Alberta, Canada}
	
	\keywords{graph algorithms, intersection graphs, filament graphs}
	\received{2015-8-7}
	
	\revised{2016-7-27, 2017-6-2}
	
	\accepted{2017-6-2}

	\begin{document}
	\publicationdetails{19}{2017}{1}{20}{1274}

\maketitle

\begin{abstract}
The overlap graphs of subtrees of a tree are equivalent to subtree filament graphs, the overlap graphs of subtrees of a star are cocomparability graphs, and the overlap graphs of subtrees of a caterpillar are interval filament graphs. In this paper, we show the equivalence of many more classes of subtree overlap and subtree filament graphs, and equate them to classes of complements of cochordal-mixed graphs. 
Our results generalise the previously known results mentioned above.
\end{abstract}

\section{Introduction}\label{intro}
The class of subtree overlap graphs is equivalent to the class of subtree filament graphs, which means that those graphs have both overlap and filament intersection representations on trees \cite{Jess}.
The class contains many graph classes that have extensive structural properties, algorithms, and complexity results, such as interval graphs, permutation graphs, cocomparability graphs, chordal graphs, circle graphs, circular-arc graphs, polygon-circle graphs, and interval filament graphs.  
Some of these graphs have been characterised in terms of subtree overlap representations on restricted host trees. In particular, cocomparability graphs are the overlap graphs of subtrees of a star (follows from \cite{GolSch}), circle graphs are the overlap graphs of subtrees of a path \cite{gavril1973}, and interval filament graphs are the overlap graphs of subtrees of a caterpillar (this fact was presented at a workshop but not published \cite{CGO}).
Thus, we have the equivalence of general subtree overlap graphs and subtree filament graphs, and we know that some subtree overlap graphs that admit representations on restricted host trees are equivalent to well-known graph classes, one of which is a natural class of subtree filament graphs.

In this paper, we identify new equivalences between subtree overlap and subtree filament graph classes based on host trees of their representations.
We first introduce the notion of a covering subtree of a tree representation. We show that the host tree of any subtree overlap representation can be modified so that it consists of just a covering subtree plus some additional leaves, without altering the represented graph.
In addition, we prove that a graph has a subtree overlap representation with a given covering subtree if and only if it is the complement of a restricted type of cochordal-mixed graph. 
Finally, we show that for a set $\mathcal{S}$ of trees that is closed under edge subdivision, a graph has a subtree filament representation with host tree in $\mathcal{S}$ if and only if it has a subtree overlap representation with covering subtree in $\mathcal{S}$. 

Our first theorem generalises the characterisation of cocomparability graphs as the overlap graphs of subtrees of a star by equating the overlap graphs of subtrees of a star with subtree overlap graphs that have a representation with a single-vertex covering subtree. Theorem 1 generalises this correspondence for any given covering subtree.

Our second theorem bridges the gap between the previously known equivalence of general subtree overlap graphs and subtree filament graphs, and the characterisation of interval filament graphs as the overlap graphs of subtrees of a caterpillar. Both of those previously known results and new equivalences between subtree overlap and subtree filament graph classes are given in Theorem 2. The theorem suggests a division of subtree overlap graphs into subclasses,
each of which contains all interval filament graphs, and the union of which
is the class of subtree overlap graphs. 
This view of subtree overlap graphs may give insight into their structure and the algorithmic complexity of problems in that domain. 

We consider finite, simple graphs. Let $G=(V,E)$ be a graph.
The {\em neighbourhood} of a vertex $v \in V$ in $G$ is $N_G(v) = \{ u ~|~ uv \in E \}$.
$K_n$ denotes the complete graph on $n$ vertices.

Two sets $A$ and $B$ {\em intersect} if $A \cap B \ne \emptyset$, and
{\em overlap}, denoted $A \between B$, if $A \cap B \ne \emptyset$, $A \not\subseteq B$, and $B \not\subseteq A$. Sets $A$ and $B$ are {\em disjoint}, denoted $A | B$, if $A \cap B = \emptyset$.
Let $A, B, A', B'$ be four nonempty sets.  We say that $(A, B)$ and $(A', B')$ are \emph{similarly related}, denoted $(A, B) \sim (A',B')$ if $A|B$ if and only if $A'|B'$ and $A \between B$ if and only if $A' \between B'$. 

Let $\mathcal{S} = \{ S_1, S_2, \ldots, S_n \}$ be a multiset of nonempty sets. We use the term multiset rather than set to allow for the possibility that $S_i = S_j$ for some $1 \le i,j \le n$ where $i \ne j$. The {\em intersection graph} 
(respectively, {\em overlap graph}, {\em disjointness graph}, {\em containment graph}) of $\mathcal{S}$ is the graph $G=(V,E)$ where $V = \{ v_1, v_2, \ldots, v_n\}$ and, for all $1 \le i,j \le n$, $v_i v_j \in E$ if and only if $S_i$ and $S_j$ intersect (respectively, overlap, are disjoint, are contained one in the other). 
If $G$ is the intersection,
overlap, disjointness, or containment graph of $\mathcal{S}$ then $\mathcal{S}$ is
called an intersection, overlap, disjointness, or containment representation
of $G$. Every graph has both an intersection and a disjointness
representation \cite{Marcz} as well as an overlap representation (obtained
by adding a unique new element to each set of
an intersection representation). 
Note
that, for $\mathcal{S} = \{ S_1, S_2, \ldots, S_n \}$ and 
$\mathcal{S'} = \{ S'_1, S'_2, \ldots, S'_n \}$ where
$(S_i, S_j) \sim (S'_i, S'_j)$ for all $1 \le i,j \le n$,
the intersection (respectively,
overlap, disjointness, containment) graphs of $\mathcal{S}$ and $\mathcal{S'}$ are
isomorphic.

In this paper, we are concerned with the intersection, overlap, disjointness, and containment graphs of subtrees of a tree. For a given tree $T$, we will assume that a collection of $n$ subtrees of $T$ is given as a multiset $\mathcal T = \{ t_1, \ldots, t_n \}$ of subsets of the vertices of $T$, each of which induces a subtree of $T$.

{\em Interval graphs} are the intersection graphs of intervals
on a line or, equivalently, the intersection graphs of subtrees of a path.
{\em Cointerval graphs} are the complements of interval graphs, that is, the disjointness graphs of subtrees of a path.
{\em Circle graphs} are the 
intersection graphs of chords in a circle or, equivalently, the overlap graphs of subtrees of a path.  {\em Chordal graphs} are graphs in which
every cycle of length greater than three has a chord
or, equivalently, the intersection graphs of subtrees in
a tree. 
{\em Cochordal graphs} are the complements of chordal graphs.
{\em Comparability graphs} are graphs whose edges
can be transitively oriented. Equivalently, comparability
graphs are the containment graphs of subtrees of a
tree, the containment graphs of subtrees of a star, and
the set of all containment graphs \cite{GolSch}. 
{\em Cocomparability} graphs are the complements of comparability graphs.
{\em Subtree overlap graphs} are the overlap graphs of subtrees in a tree.
If a graph $G=(V,E)$ is the overlap (respectively, intersection, containment, or disjointness) graph of 
subtrees $\mathcal{T}$ of a tree $T$, then $\mathcal{T}$ is a \emph{subtree overlap (respectively, intersection, containment, or disjointness) representation} of $G$. $T$ is termed the {\em host tree} of the representation. For convenience, we will use the notation that vertex $v_i \in V$ corresponds to subtree $t_i \in \mathcal{T}$. 
A {\em caterpillar} is a tree such that the removal of its leaves results in a path.
All of the graph classes defined in this section are {\em hereditary}, that is, every induced subgraph of a graph in the class is also in the class.
For more information about graph classes, see \cite{BLS}.

Interval filament graphs, subtree filament graphs, and $\mathcal G$-mixed graphs were introduced by Gavril \cite{gavril2000}. We give the definitions of those graph classes and related concepts next.

Let $\mathcal I = \{I_1, \ldots, I_n \}$ be a multiset of (closed) intervals on a line $L$ and let $P$ be a plane containing $L$. We will refer to one of the half-planes into which $L$ divides $P$ as being above $L$.
Then $\mathcal F = \{f_1, \ldots, f_n \}$ is a set of interval filaments on the intervals of $\mathcal I$ if
each $f_i$, $1 \le i \le n$, is a curve in $P$, on and above $L$, connecting the endpoints of $I_i$, such that if two intervals are disjoint then their curves do not intersect.
Thus, pairs of filaments corresponding to disjoint intervals do not intersect, pairs of filaments corresponding to overlapping intervals intersect, and pairs of filaments corresponding to intervals where one is contained in the other may or may not intersect.
The intersection graph of $\mathcal F$ is called an {\em interval filament graph}.

Subtree filaments and subtree filament graphs are defined analogously.
Let $T$ be a tree and let $\mathcal T = \{ t_1, \ldots, t_n \}$ be a multiset of subtrees of $T$. Suppose that $T$ is embedded in a plane $P$ and let $S$ be a surface perpendicular to $P$ whose intersection with $P$ is $T$. 
(One can imagine forming the part of $S$ that is above $T$ by drawing $T$ upwards from $P$.)
Then $\mathcal F = \{f_1, \ldots, f_n \}$ is a set of subtree filaments on the subtrees of $\mathcal T$ if
each $f_i$, $1 \le i \le n$, is a curve in $S$, in and above $T$, connecting the leaves of $t_i$, such that 
(i) if two subtrees are disjoint then their curves do not intersect, and
(ii) if two subtrees overlap then their curves intersect.
Thus, pairs of filaments corresponding to disjoint subtrees do not intersect, pairs of filaments corresponding to overlapping subtrees intersect, and 
pairs of filaments corresponding to intervals where one is contained in the other may or may not intersect.
If a graph $G$ is the intersection graph of a collection of filaments on subtrees of a tree $T$, then $G$ is a {\em subtree filament graph} and the collection of filaments is a {\em subtree filament representation} of $G$. The tree $T$ is called the {\em host tree} of the representation. 

Let $\mathcal{G}$ a hereditary graph class. A graph $G = (V, E)$ is said to be {\em $\mathcal{G}$-mixed} if there is a partition of its edges into $E_1$ and $E_2$ such that:
\begin{itemize}
 \item $G_1 = (V, E_1)$ is in $\mathcal{G}$ and 
 \item there is a transitive orientation 
$ (V, \overrightarrow{E_2})$ 
 of the graph $(V, E_2)$ such that
 for every three distinct vertices $u,v,w \in V$, if
      $(u \rightarrow v) \in \overrightarrow{E_2}$ and $v w \in E_1$, then $u w \in E_1$.
 \end{itemize}
Such a partition is called a {\em $\mathcal{G}$-mixed partition} of the edges of $G$.
Subtree filament graphs are exactly the complements of cochordal-mixed graphs, and interval filament graphs are exactly the complements of cointerval-mixed graphs \cite{gavril2000}.
 
Let $\mathcal{T}$ be a multiset of subtrees of a tree $T$.
A subtree $t$ of $T$ is called
a {\em covering subtree} of 
$\mathcal{T}$ 
if it intersects every member of $\mathcal{T}$. 
Note that the intersection of each element of $\mathcal{T}$ with a covering subtree $t$ is a subtree of $t$.
Let $T$ be a tree and let $t$ be a subtree of $T$.
A vertex $v$ of $t$ is called {\em bushy} 
(with respect to $t$ in $T$)
if every neighbour of $v$ that is not in $t$ is a leaf of $T$; the entire subtree $t$ is called {\em bushy} 
(in $T$) 
if every vertex of $t$ is bushy
(with respect to $t$ in $T$).  An example of a tree that contains both bushy and non-bushy vertices is given in Figure \ref{fig:bushinessExample}.
\begin{figure}
\begin{center}
\scalebox{1}{\includegraphics{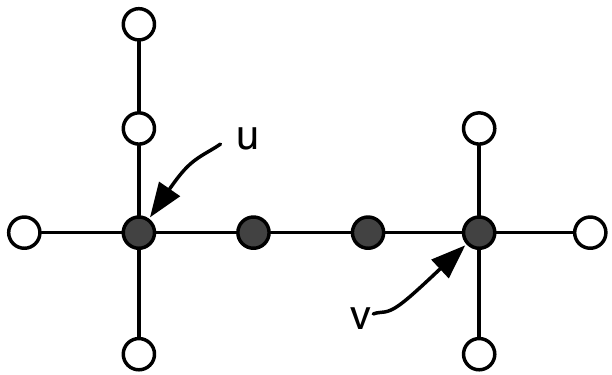}}
\caption{A tree $T$ and a subtree $t$, induced by the darker grey vertices.  The vertex $u$ is not bushy with respect to $t$ in $T$ because it has a neighbour (above it in the diagram) that is not in $t$ and is not a leaf of $T$.  Vertex $v$ is bushy with respect to $t$ in $T$ because all of its neighbours that are not in $t$ are leaves of $T$. }
\label{fig:bushinessExample}
\end{center}
\end{figure}
Now we can define the graph classes that will be examined in Section \ref{equiv}.
Let $G$ be a graph and $\mathcal{S}$ be a set of trees. 
\begin{itemize}
\item
$G$ is an {\em $\mathcal{S}$-covered subtree overlap graph} if it has a subtree overlap representation that has a covering subtree isomorphic to a tree in $\mathcal{S}$.
Such a representation is an {\em $\mathcal{S}$-covered subtree overlap representation} of $G$. 
\item
$G$ is a {\em bushy $\mathcal{S}$-covered subtree overlap graph} if it has a subtree overlap representation with a bushy covering subtree isomorphic to a tree in $\mathcal{S}$.
Such a representation is a {\em bushy $\mathcal{S}$-covered subtree overlap representation} of $G$. 
\item
$G$ is an {\em $\mathcal{S}$-subtree-filament graph} if there is a subtree filament representation of $G$ such that the host tree is isomorphic to a member of $\mathcal{S}$. Such a representation is an {\em $\mathcal{S}$-subtree-filament representation} of $G$.
\item
$G$ is an {\em $\mathcal{S}$-cochordal graph} if it 
has a subtree disjointness representation such that the host tree is isomorphic to a member of $\mathcal{S}$.
Such a representation is an {\em $\mathcal{S}$-cochordal representation} of $G$. 
\end{itemize}
When $\mathcal{S}$ has just one element, say $T$, we sometimes write $T$ instead of $\{T\}$ in the above notation.

\section{Subtree representations}

In this section, we give methods for transforming a given multiset of subtrees of a tree into another representation of the same type 
(i.e., intersection, overlap, disjointness, containment) 
for the same graph.
Let $G=(V,E)$ be a graph and let $x \notin V$. For any given $uv \in E$, 
the {\em subdivision of edge $uv$ (with vertex $x$)} is the operation of removing the edge $uv$ from $G$ and adding the vertex $x$ and the edges $ux$ and $xv$. The vertex $x$ is called a {\em subdivision vertex}. A graph $H$ is a \emph{subdivision} of $G$ if $H$ can be obtained from $G$ by zero or more edge subdivisions.  

We first mention some simple alterations that can be made to a given tree $T$ and multiset $\mathcal T$ of subtrees of $T$ without changing the relationships among the elements of $\mathcal T$.
First, if a new leaf is added to $T$ but to no element of $\mathcal T$, then the subtrees are unchanged and therefore the relationships among them remain the same. Second, if an edge is subdivided in $T$ and in every element of $\mathcal T$ that contains the edge, then the relationships among the subtrees remain unchanged.
Therefore, if $T$ and $T'$ are trees such that $T'$ can be obtained from $T$ by a sequence of leaf additions and edge subdivisions, then 
any graph that is the intersection, overlap, containment, or disjointness graph of subtrees of $T$ is also the 
intersection, overlap, containment, or disjointness (respectively) graph of subtrees of $T'$.
Furthermore,
the $T$-subtree-filament graphs form a subset of the $T'$-subtree filament graphs, and the $T$-covered subtree overlap graphs form a subset of the $T'$-covered subtree overlap graphs.

Recalling that we denote subtrees as subsets of vertices of a tree, each of which induces a subtree:
\begin{lemma} \label{lem:subdiv}
Let $T=(V_T, E_T)$ be a tree and $\mathcal{T} = \{ t_1, \ldots, t_n  \}$ be a multiset of subtrees of $T$. 
Let $v$ be a vertex of $T$ and let $w$ be a neighbour of $v$ in $T$.
Let $T'$ be the tree obtained from $T$ by 
subdividing the edge $vw \in E_T$ with a vertex $x \notin V_T$.
Let $\mathcal S$ be a (possibly empty) subset of $\mathcal T$, where each element of 
$\mathcal S$ contains $v$.  
For all $1 \le i \le n$, let
\[
t'_i = \left\{
\begin{array}{ll}
t_i \cup \{x\} & \mbox{if } v, w \in t_i, \mbox{ or } t_i \in \mathcal S, \mbox{ or there is an element } t_k \in \mathcal S \mbox{ such that } t_k \subset t_i,\\
t_i & \mbox{otherwise.}
\end{array}
\right.
\]
Then
$\{ t'_1, \ldots, t'_n \}$ is a multiset of subtrees of $T'$ and,
for all $1 \le i,j \le n$, $(t_i, t_j) \sim (t'_i, t'_j)$.
\end{lemma}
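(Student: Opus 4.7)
The plan is to verify (I) each $t'_i$ is a subtree of $T'$, and (II) the intersection and set-containment relations among the $t_i$ are preserved in passing to the $t'_i$, which together give $(t_i,t_j)\sim(t'_i,t'_j)$, since disjointness, overlap, and containment (including equality) exhaust the possible pairwise relations.

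For (I), I would observe first that whenever $x\in t'_i$ we have $v\in t_i$: by assumption in the case $v,w\in t_i$, by the hypothesis on $\mathcal{S}$ in the case $t_i\in\mathcal{S}$, and via $v\in t_k\subseteq t_i$ in the case that some $t_k\in\mathcal{S}$ is contained in $t_i$. Since $x$ is adjacent to $v$ in $T'$, adjoining $x$ to $t_i$ yields a subtree; when $w\in t_i$ as well, the $v$-to-$w$ path through $x$ in $T'$ simply replaces the edge $vw$ of $T$, so the induced subgraph remains connected and acyclic. When $x\notin t'_i$, condition (a) fails, so $t_i$ cannot contain both $v$ and $w$; hence $t_i$ uses no edge of $T$ that was destroyed in forming $T'$, and $t_i$ is still a subtree in $T'$.

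For (II), I would handle intersection first: $t_i\subseteq t'_i$ gives one direction, and a common element of $t'_i$ and $t'_j$ either belongs to $t_i\cap t_j$ directly or equals $x$, in which case the observation from (I) forces $v\in t_i\cap t_j$. Containment then splits by whether $x\in t'_i$ and whether $x\in t'_j$: if $x$ lies in both or in neither, the equivalence $t_i\subseteq t_j\iff t'_i\subseteq t'_j$ is immediate from $x\notin V_T$.

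The main obstacle is the asymmetric subcase, say $x\in t'_i$ and $x\notin t'_j$: here $t'_i\not\subseteq t'_j$ is forced, so I must rule out $t_i\subseteq t_j$. The key structural fact is that each of the three triggering conditions propagates upward under set inclusion among members of $\mathcal T$. Assuming $t_i\subseteq t_j$: condition (a) propagates directly; condition (b) propagates via (b) when $t_j=t_i$ and via (c) with witness $t_k=t_i$ when $t_i\subsetneq t_j$; and condition (c) propagates by reusing the same witness $t_k\in\mathcal{S}$. Each of these would force $x\in t'_j$, contradicting the hypothesis, so $t_i\not\subseteq t_j$ as required. The symmetric subcase $x\in t'_j$, $x\notin t'_i$ is analogous, and combining everything yields $(t_i,t_j)\sim(t'_i,t'_j)$ for all $i,j$.
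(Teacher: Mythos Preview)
Your proposal is correct and follows essentially the same approach as the paper: both hinge on the observation that $x\in t'_i$ forces $v\in t_i$, both perform a case split on which of $t'_i,t'_j$ contain $x$, and both resolve the asymmetric case by the upward-propagation argument for the three triggering conditions. The only cosmetic difference is that the paper phrases the containment check as preservation of the disjunction $t_i\setminus t_j=\emptyset$ or $t_j\setminus t_i=\emptyset$, whereas you check each inclusion separately; in particular, in the asymmetric case the paper first disposes of $t_j\subseteq t_i$ (which gives $t'_j\subseteq t'_i$ trivially) and only then assumes $t_j\setminus t_i\neq\emptyset$ before running the propagation argument, so that the hypothetical $t_i\subseteq t_j$ is automatically strict and the ``$t_j=t_i$'' branch of your propagation never arises.
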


\begin{proof}
The elements of $\{ t'_1, \ldots, t'_n \}$ induce connected subgraphs of $T'$ by the construction.
Note that, for all $1 \le i \le n$, 
$t'_i = t_i \cup \{x\}$ only if $v \in t_i$, and
$t'_i = t_i$ if and only if $x \notin t'_i$.
Let $t_i, t_j \in \mathcal T$ where $1 \le i,j \le n$.
We will prove that $(t_i, t_j) \sim (t'_i, t'_j)$ by showing that
both of the following hold:
$t_i \cap t_j = \emptyset$ if and only if $t'_i \cap t'_j = \emptyset$;  ($t_i \backslash t_j = \emptyset$ or $t_j \backslash t_i = \emptyset$) if and only if ($t'_i \backslash t'_j = \emptyset$ or $t'_j \backslash t'_i = \emptyset$). There are three cases to be considered, based on whether $x$ is in neither, one, or both of $t'_i$ and $t'_j$.

If $x$ is in neither $t'_i$ nor $t'_j$ then $t'_i = t_i$ and $t'_j = t_j$ and the result clearly follows.

If $x$ is in both $t'_i$ and $t'_j$ then $v$ is in both $t_i$ and $ t_j$, and therefore $t_i \cap t_j$ and $t'_i \cap t'_j$ are both nonempty.
Furthermore, $t'_i \backslash t'_j = (t_i \cup \{x\}) \backslash (t_j \cup \{x\}) = t_i \backslash t_j$ and, similarly, $t'_j \backslash t'_i = t_j \backslash t_i$. So the result follows.

If $x$ is in just one of $t'_i$ and $t'_j$, 
then suppose without loss of generality that $x$ is in $t'_i$ and not in $t'_j$. 
Then $t'_i = t_i \cup \{x\}$, $t'_j = t_j$, and $v \in t_i$. 
So 
$t'_i \cap t'_j = (t_i \cup \{x\}) \cap t_j = t_i \cap t_j$.
Furthermore,
$t'_i \backslash t'_j = (t_i \cup \{x\}) \backslash t_j = (t_i \backslash t_j) \cup \{x\} \ne \emptyset$,
and $t'_j \backslash t'_i = t_j \backslash (t_i \cup \{x\}) = t_j \backslash t_i$.
If $t_j \backslash t_i = \emptyset$ then $t'_j \backslash t'_i = \emptyset$ and we have 
($t_i \backslash t_j = \emptyset$ or $t_j \backslash t_i = \emptyset$) if and only if ($t'_i \backslash t'_j = \emptyset$ or $t'_j \backslash t'_i = \emptyset$),
so the result follows.
In the remainder of the proof, we handle the case where $t_j \backslash t_i \ne \emptyset$.
Since $t'_j \backslash t'_i = t_j \backslash t_i$ we also have $t'_j \backslash t'_i \ne \emptyset$ and, from before, $t'_i \backslash t'_j \ne \emptyset$.
Therefore, to complete the proof, we must show that $t_i \backslash t_j \ne \emptyset$.
Suppose for contradiction that $t_i \backslash t_j = \emptyset$, that is, $t_i \subseteq t_j$, which implies that $t_i \subset t_j$ since $t_j \backslash t_i \ne \emptyset$.
Since $x$ is in $t'_i$, one of the following must hold: 
(i) $v,w \in t_i$,
(ii) $t_i \in \mathcal S$, 
or 
(iii) there is an element $t_k \in \mathcal S$ such that $t_k \subset t_i$.
But then, since $t_i \subset t_j$, one of the following must also hold (respectively):
(i) $v, w \in t_j$, 
(ii) $t_i$ is an element of $\mathcal S$ such that $t_i \subset t_j$, 
or
(iii) $t_k$ is an element of $\mathcal S$ such that $t_k \subset t_i \subset t_j$.
In any case, this implies that $x \in t'_j$, a contradiction.
Therefore $t_i \backslash t_j \ne \emptyset$ as required.

In each case, $(t_i , t_j) \sim (t'_i , t'_j)$ and therefore the proof is complete.
\end{proof}

In order to define a subtree filament representation on subtrees $\mathcal T$ of a tree $T$, it is convenient to make some assumptions about the elements of $\mathcal T$. Since we are concerned with the host trees of representations, we need to consider the effect on a host tree of enforcing those assumptions. This is the subject of the next definition and lemma, which will be used in the proof of Theorem \ref{bigThm}.

\begin{property}\label{A}
Subtrees
$\mathcal{T}$ of tree $T$ are said to satisfy the {\em nontrivial intersection distinct leaf property} if:
\begin{itemize}
\item
each element of $\mathcal{T}$ is nontrivial, 
\item
every pair of elements of $\mathcal{T}$ are either disjoint or share two or more vertices, and 
\item
no vertex of $T$ is a leaf of two distinct members of $ \mathcal{T}$.
\end{itemize}
\end{property}

Note that the third requirement in the above property guarantees that the elements of $\mathcal T$ are distinct.

The next lemma shows that every multiset of subtrees of a tree can be transformed into subtrees of a tree that satisfy Property \ref{A}, without altering the relationships among the subtrees.
Because we are concerned with specific host trees, we consider the effect of the transformation on the host tree.

\begin{lemma} \label{lem:bushyGivesNice}
Let $T=(V_T, E_T)$ be a nontrivial tree and $\mathcal{T} = \{ t_1, \ldots, t_n  \}$ be a multiset of subtrees of $T$.
There exists a tree $T'$ and set $\mathcal{T}' = \{ t'_1, \ldots, t'_n \}$ of subtrees of $T'$ such that
\begin{itemize}
\item
for all $1 \le i,j \le n$, $(t_i, t_j) \sim (t'_i, t'_j)$,
\item
$T'$ is a subdivision of $T$,
and
\item
$\mathcal{T}'$ satisfies Property \ref{A}.
\end{itemize}
\end{lemma}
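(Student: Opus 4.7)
The plan is to modify $(T, \mathcal{T})$ through a sequence of operations, each preserving the $\sim$-relation, to successively establish the three bullets of Property~\ref{A}. Lemma~\ref{lem:subdiv} is the primary tool for the first two bullets and for the easy case of the third, while the remaining sub-case of the third bullet will require a direct redefinition of the subtrees that goes beyond what Lemma~\ref{lem:subdiv} can do.

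For nontriviality, I iterate over singleton subtrees $t_i = \{v\}$: since $T$ is nontrivial, $v$ has a neighbour $w$, and Lemma~\ref{lem:subdiv} applied to $vw$ with $\mathcal{S} = \{t_i\}$ grows $t_i$ to $\{v, x\}$. For the two-vertex intersection condition, I iterate over pairs $(t_i, t_j)$ whose current intersection is a single vertex $v$; Lemma~\ref{lem:subdiv} applied to an edge incident to $v$ with $\mathcal{S} = \{t_i, t_j\}$ adds a new vertex $x$ to both subtrees via the $\mathcal{S}$-clause, making their intersection size at least two. Both operations are monotone in subtree size, so the set of singleton subtrees and the set of size-one intersections shrink strictly and earlier progress is preserved.

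For the no-shared-leaf condition, I first handle each offending vertex $v$ that has at least two neighbours in the current tree. I choose an offending subtree $t_i$ carefully (e.g., maximal under containment among the offending subtrees at $v$) and a neighbour $w$ of $v$ not in $t_i$; such a $w$ exists precisely because $v$ is a leaf of $t_i$ and so has a unique neighbour in $t_i$. Applying Lemma~\ref{lem:subdiv} to $vw$ with $\mathcal{S} = \{t_i\}$ gives $v$ a second neighbour (the new vertex $x$) in $t'_i$, so $v$ ceases to be a leaf of $t'_i$; with the careful choice of $t_i$ and $w$, the new vertex $x$ is not a leaf of any other subtree, so no fresh shared-leaf issues are introduced, and iteration eliminates this case.

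The main obstacle is the remaining sub-case in which $v$ is a leaf of $T$. Because $T'$ is a subdivision of $T$ and subdivision preserves the leaf set, $v$ remains a leaf of every $T'$, and any nontrivial subtree containing $v$ must contain $v$'s unique neighbour in $T'$, so $v$ is forced to be a leaf of every such subtree; the monotone Lemma~\ref{lem:subdiv} therefore cannot remove $v$ from any of them. I handle this by a direct redefinition rather than a further application of Lemma~\ref{lem:subdiv}: subdivide $v$'s incident edge into a long path of subdivision vertices, and for each subtree $t_i$ originally containing $v$, replace $t_i$'s portion near $v$ by a segment ending at a distinct subdivision vertex $x_{d_i}$ instead of at $v$. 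Assigning the depths $d_i$ in the order dictated by containment among the subtrees at $v$ (larger subtrees receiving smaller $d_i$) makes each disjoint, overlap, and containment relation of $t_i$ with the other subtrees replicate in $\mathcal{T}'$; verifying this preservation of $\sim$ case by case, together with checking that the resulting leaves of $\mathcal{T}'$ are pairwise distinct, is the technical heart of the lemma.
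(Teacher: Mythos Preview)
Your handling of the first two bullets of Property~\ref{A} is fine and broadly parallels the paper. The genuine gap is in your degree-$\ge 2$ case of the shared-leaf step. You assert that with $\mathcal S=\{t_i\}$ and a careful choice of $t_i$ (maximal among the offending subtrees at $v$) and $w$, the subdivision vertex $x$ becomes a leaf of no subtree other than $t'_i$. This is false in general, because the third clause of Lemma~\ref{lem:subdiv} adds $x$ to every $t_k$ with $t_i\subset t_k$, and such a $t_k$ need not be offending at $v$ and need not contain $w$; when $w\notin t_k$, the vertex $x$ is a leaf of $t'_k$. Concretely, take $T$ the spider with centre $v$ and legs $v\,a\,a'$, $v\,b\,b'$, $v\,c\,c'$, and subtrees $t_1=t_4=\{v,a\}$, $t_2=\{a,v,b\}$, $t_3=\{a,v,c\}$ (these already satisfy your first two bullets). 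The offending subtrees at $v$ are $t_1,t_4$; whichever you pick as $t_i$ and whichever $w\in\{b,c\}$ you choose, one of $t_2,t_3$ properly contains $t_i$ but omits $w$, so $x$ becomes a shared leaf of $t'_i$ and that $t'_k$. The total number of shared-leaf pairs does not decrease, and your iteration need not terminate.

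The paper sidesteps this entirely by a short global preprocessing you are missing: first arrange that no leaf of $T$ lies in any $t_i$ (attach a new pendant to each used leaf; attaching a leaf to a leaf \emph{is} a subdivision, so $T'$ remains a subdivision of $T$), and then subdivide every original edge twice, taking $\mathcal S$ to be \emph{all} subtrees through the relevant endpoint. After this, every subtree-leaf sits at a degree-$2$ subdivision vertex, and any two subtrees sharing that leaf necessarily contain the same neighbour of it. From that structured situation the shared leaves are separated by \emph{growing} the subtrees along the edge via repeated applications of Lemma~\ref{lem:subdiv} (sorted by size), not by truncation. In particular your whole ``$v$ is a leaf of $T$'' sub-case, and the direct truncation argument you flag as the technical heart, become unnecessary; and incidentally, for truncation to preserve containment you would need larger subtrees to receive \emph{larger} depths, not smaller ones as you wrote.
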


\begin{proof}
Let $T=(V_T, E_T)$ be a nontrivial tree and $\mathcal{T} = \{ t_1, \ldots, t_n  \}$ be a multiset of subtrees of $T$.
We may assume that no leaf of $T$ is contained in any element of $\mathcal{T}$. Otherwise, for each leaf $\ell$ of $T$ that is contained in an element of $\mathcal{T}$, we could add a new leaf to $T$ adjacent to $\ell$. Then $T$ would be isomorphic to a subdivision of the original tree and would satisfy the assumption. 
We show how to transform $T$ and $\mathcal{T}$ into $T'$ and $\mathcal{T}'$, respectively,  such that the conditions of the lemma are satisfied.

First, 
we perform $2 |E_T|$ applications of the transformation of Lemma \ref{lem:subdiv}, as follows.
The initial application is performed on $T$ and $\mathcal T$ 
with $v$ and $w$ being the endpoints of an edge of $T$ and $\mathcal S$ being the set of all elements of $\mathcal T$ that contain $v$. By Lemma \ref{lem:subdiv}, this transformation
results in a tree and a multiset of subtrees of the tree. Each subsequent application is performed on the tree and subtrees resulting from the previous step, and again produces a tree and a multiset of subtrees. In the following description, we refer to the current tree and subtrees as $T_c$ and $\mathcal T_c$ respectively.
Overall,
for each edge $pq$ of the initial tree $T$, we apply the transformation of Lemma \ref{lem:subdiv} twice, once with $v=p$, $w=q$, and $\mathcal S$ being all elements of $\mathcal T_c$ that contain $p$;
then with $v=q$, $w=x$ where $x$ is the subdivision vertex from the previous step, and
$\mathcal S$ being all elements of $\mathcal T_c$ that contain $q$.
By Lemma \ref{lem:subdiv}, this process finally results in a tree $T_2$ and
subtrees $\mathcal{T}_2= \{ t^2_1, \ldots, t^2_n \}$ of $T_2$, with
$(t_i,t_j) \sim (t^2_i,t^2_j)$ for all $ 1 \le i,j \le n$.
Clearly, $T_2$ is a subdivision of $T$.

Let $t_i$ be a trivial element of $\mathcal{T}$ and let $p$ be the single vertex of $t_i$. Since $T$ is nontrivial, $p$ is incident with an edge of $T$, and the new vertex introduced in the subdivision of that edge with $v=p$ is added to $t_i$ in the construction of $t^2_i$.
Therefore each element of $ \mathcal{T}_2 $ is nontrivial.

Suppose that distinct subtrees $t_i, t_j \in \mathcal{T}$ intersect in just one vertex, $p$. 
Since $T$ is nontrivial, $p$ is incident with an edge of $T$. The new vertex that subdivides that edge is in both $t^2_i$ and $t^2_j$. Therefore every pair of elements of $\mathcal{T}_2$ are either disjoint or share an edge.

Since no leaf of $T$ is contained in an element of $\mathcal{T}$,
the leaves of members of $ \mathcal{T}_2 $
are all vertices of $T_2$ that are not in $V_T$.
Therefore, every vertex that is a leaf of a subtree of 
$ \mathcal{T}_2 $ has degree two in $T_2$.
Furthermore, if two elements of $ \mathcal{T}_2 $ share a leaf, say $p$, then they both contain the neighbour of $p$ that played the role of $v$ during the subdivision when $p$ was introduced, and not the other neighbour as that would contradict $p$ being a leaf of both subtrees.

To complete the proof, we show how to 
reduce the number of vertices of $T_2$ that are leaves of two or more distinct elements of $\mathcal{T}_2$. Applied iteratively, this leads to a representation that satisfies all conditions of the lemma.
Let $p$ be a node in $T_2$
and let $\mathcal{T}_p$ be the elements of $\mathcal{T}_2 $ that contain $p$ as a leaf.
Let $n_p = |\mathcal{T}_p|$ and suppose that $n_p \ge 2$.
By the observation of the preceding paragraph, $p$ has degree two in $T_2$
and every element of $\mathcal{T}_p $ also contains one of $p$'s neighbours and not the other.
Let $q$ and $r$ be the neighbours of $p$, such that every element of $\mathcal{T}_p $ contains $q$ and not $r$.
Let the elements of $\mathcal{T}_p$ be
sorted by nondecreasing size so that each element of $\mathcal{T}_p$ has a position from $1$ to $n_p$ in the sorted list.

We now use $n_p$ new vertices, $s_1, \ldots, s_{n_p}$, which are not vertices of $T_2$, as subdivision vertices in $n_p$ applications of Lemma \ref{lem:subdiv}.
The first application is performed on $T_2$ and $\mathcal T_2$, and produces a tree and subtrees of a tree. Each subsequent application is performed on the tree and subtrees resulting from the previous step. We will refer to the current tree and subtrees as $T_c$ and $\mathcal T_c$, respectively, and the current subtrees corresponding to those of $\mathcal T_p$ as $\mathcal T_{pc}$.
First, apply the transformation of Lemma \ref{lem:subdiv} to $T_2$ and $\mathcal T_2$ with $v=p$, $w=r$, and $\mathcal S$ consisting of the single element of $\mathcal T_p$ having position $n_p$ in the sorted list, using the vertex $s_{n_p}$ as the subdivision vertex.
Then, for each $\ell$ from $n_p-1$ down to 1, apply the transformation of Lemma \ref{lem:subdiv} to $T_c$ and $\mathcal T_c$ with $v=p$ and $w=s_{\ell+1}$ (the subdivision vertex from the previous step), with $\mathcal S$ being the set of all elements of $\mathcal T_{pc}$ with corresponding elements in the sorted list of elements of $\mathcal T_p$ having positions greater than or equal to $\ell$, and using subdivision vertex $s_{\ell}$.
By Lemma \ref{lem:subdiv}, this process terminates with a tree $T_3$ and subtrees $\mathcal T_3$ of $T_3$.

By repeated application of Lemma \ref{lem:subdiv},
$\mathcal T_3$ is a multiset of subtrees of $T_3$ and
$(t^2_i, t^2_j) \sim (t^3_i, t^3_j)$ for all $1 \le i,j \le n$.
In addition,  $T_3$ is a subdivision of $T_2$ and therefore of $T$. Since $t^2_i \subseteq t^3_i$ for all $1 \le i \le n$, each $ t^3_i $ has at least two vertices and each pair of intersecting subtrees of $\mathcal{T}_3$ shares an edge. No vertex has had its degree increased, and only degree two vertices have been added; therefore, every vertex of $T_3$ that is a leaf of any subtree of $ \mathcal{T}_3 $ has degree two in $T_3$. 
By the construction of $\mathcal T_2$, each element $t_i^2$ of $\mathcal T_2 \backslash \mathcal T_p$ contains all or none of $p, q,$ and $r$, and therefore
$t_i^3$ either contains all of the $s_{\ell}$'s and $r$, or none of the $s_{\ell}$'s, and therefore does not contain any of the $s_{\ell}$'s as a leaf.
Each $t_i^3 \in \mathcal T_3$ such that $t_i^2 \in \mathcal T_p$
has exactly one of the $s_{\ell}$'s as a leaf, namely, $s_{\ell}$ where $\ell$ is the position of $t_i^2$ in the sorted list of the elements of $\mathcal T_p$. 
Therefore, $p$ is not a leaf of any element of $ \mathcal{T}_3 $ and each new vertex $s_{\ell}$ of $T_3$ is a leaf of just one element of $ \mathcal{T}_3 $, namely, the subtree corresponding to the element of $\mathcal T_p$ in position $\ell$ of the sorted list. 
Thus the number of vertices of $T_3$ that are leaves of two or more distinct elements of $ \mathcal{T}_3 $ is less than than the number of vertices of $T_2$ that are leaves of two or more distinct elements of $\mathcal{T}_2 $. 
Applied iteratively, this method eventually produces subtrees $\mathcal{T}'$ of a tree $T'$ that satisfy the lemma.
\end{proof}

\section{Equivalence of $\mathcal{S}$-covered subtree overlap graphs and $\mathcal{S}$-subtree-filament graphs}\label{equiv}

In this section, we show that for any tree $R$, 
every $R$-covered subtree overlap graph has a representation in which the host tree is just $R$ with some additional leaves, and that $R$-covered subtree overlap graphs are equivalent to the complements of $R$-cochordal-mixed graphs. This equivalence does not extend to $R$-subtree filament graphs since, for example, 
$C_4$, the cycle on four vertices, is a $K_2$-covered subtree overlap graph but not a $K_2$-subtree-filament graph.
However, the equivalence does extend to subtree filament graphs when edge subdivision is allowed. In Theorem \ref{bigThm}
we show the equivalence of 
$\mathcal{S}$-covered subtree overlap graphs,
$\mathcal{S}$-subtree-filament graphs, and
complements of $\mathcal{S}$-cochordal-mixed graphs, 
when $\mathcal S$ is a nontrivial set of trees that is closed under edge subdivision.

\begin{theorem} \label{th:3parts}
Let $R$ be a tree and let $G$ be a graph. The following statements are equivalent:
\begin{enumerate}
\item $G$ is an $R$-covered subtree overlap graph. \label{3covered}
\item $G$ is the complement of an $R$-cochordal-mixed graph. \label{3mixed}
\item $G$ is a bushy $R$-covered subtree overlap graph. \label{3bushy}
\end{enumerate}
\end{theorem}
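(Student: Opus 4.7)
The plan is to prove the cycle $(\ref{3covered}) \Rightarrow (\ref{3mixed}) \Rightarrow (\ref{3bushy}) \Rightarrow (\ref{3covered})$. The last implication is immediate since every bushy $R$-covered representation is, in particular, $R$-covered, so the work is in the first two implications.

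For $(\ref{3covered}) \Rightarrow (\ref{3mixed})$, I would start from an $R$-covered subtree overlap representation $\{t_v\}_{v \in V}$ on a host tree $T$ with covering subtree $R$, and set $s_v = t_v \cap R$; each $s_v$ is a nonempty subtree of $R$. Let $H = \overline{G}$ and partition $E(H) = E_1 \cup E_2$ with $E_1 = \{uv : t_u \cap t_v = \emptyset\}$ and $E_2 = \{uv : t_u \subsetneq t_v \text{ or } t_v \subsetneq t_u\}$. A useful observation: because $R$ is covering, disjointness of $t_u$ and $t_v$ in $T$ is equivalent to disjointness of $s_u$ and $s_v$ in $R$, since any common vertex of $t_u, t_v$ would force the unique $T$-paths from it to $R$ inside $t_u$ and $t_v$ to end at a shared vertex in $s_u \cap s_v$. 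Hence $(V, E_1)$ is the disjointness graph of $\{s_v\}$ on $R$, and is therefore $R$-cochordal. Orient $E_2$ by $u \to v$ iff $t_u \subsetneq t_v$; this is transitive, and the mixed property holds because $t_u \subseteq t_v$ and $t_v \cap t_w = \emptyset$ imply $t_u \cap t_w = \emptyset$.

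For $(\ref{3mixed}) \Rightarrow (\ref{3bushy})$, fix an $R$-disjointness representation $\{s_v\}$ of $(V, E_1)$ and a transitive orientation $\to$ of $E_2$ satisfying the mixed property, and write $w \preceq v$ for $w = v$ or $w \to v$. For each $w$, I claim that $\{s_w\} \cup \{s_v : w \to v\}$ is pairwise intersecting: each $s_v$ meets $s_w$ because $wv \notin E_1$; and if $w \to v_1$ and $w \to v_2$ had $s_{v_1} \cap s_{v_2} = \emptyset$ (so $v_1 v_2 \in E_1$), then the mixed condition applied to $w \to v_1$ would give $w v_2 \in E_1$, contradicting $w v_2 \in E_2$. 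By the Helly property for subtrees of a tree, pick $r_w \in s_w \cap \bigcap_{v : w \to v} s_v$. Form $T$ by attaching a new leaf $\ell_w$ to $r_w$ for each $w \in V$, and define
\[
t_v = \bigcup_{w \preceq v} s_w \,\cup\, \{\ell_w : w \preceq v\}.
\]
Each $t_v$ is a subtree, since the $s_w$'s in the union all meet $s_v$ and each included $\ell_w$ hangs off $r_w \in s_w \subseteq t_v$. Because every vertex of $T \setminus R$ is a leaf of $T$, the covering subtree $R$ is bushy.

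The main obstacle is verifying that $t_u \between t_v$ holds exactly when $uv \in E(G)$. When $uv \in E_2$, say $u \to v$, transitivity of $\to$ gives $t_u \subseteq t_v$, and $\ell_v \in t_v \setminus t_u$ (because $v \not\preceq u$) makes the containment strict, so no overlap. When $uv \in E(G)$, the intersection $s_u \cap s_v \ne \emptyset$ yields $t_u \cap t_v \ne \emptyset$, while $\ell_u \notin t_v$ and $\ell_v \notin t_u$ rule out containment, giving overlap. The delicate case is $uv \in E_1$, where I must show $t_u \cap t_v = \emptyset$. For the leaf part, no $z$ satisfies both $z \preceq u$ and $z \preceq v$: if $z \in \{u, v\}$ this directly forces $uv \in E_2$, contradicting $uv \in E_1$; otherwise $z \to u$ together with $uv \in E_1$ would give $zv \in E_1$ by the mixed condition, contradicting $z \to v$. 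For the $R$ part, a four-way case analysis on whether $w \preceq u$ has $w = u$ or $w \to u$ (and analogously $x \preceq v$) reduces in each case, via one or two applications of the mixed condition, to $s_w \cap s_x = \emptyset$, so the two unions are disjoint and $t_u \cap t_v = \emptyset$.
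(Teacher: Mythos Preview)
Your argument is essentially correct, with one small oversight in $(\ref{3covered})\Rightarrow(\ref{3mixed})$: if two vertices $u\neq v$ have $t_u=t_v$, then $t_u$ and $t_v$ neither overlap nor are disjoint nor is one a proper subset of the other, so $uv\in E(\overline{G})$ lands in neither your $E_1$ nor your $E_2$. The paper handles this by indexing the subtrees by nondecreasing size and orienting $v_i\to v_j$ whenever $t_i\subseteq t_j$ and $i<j$; you can fix your proof the same way (use any linear extension of containment to break ties), and everything else in that direction goes through as you wrote it.

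For $(\ref{3mixed})\Rightarrow(\ref{3bushy})$ your construction is correct but genuinely different from the paper's. The paper first \emph{shrinks} the $R$-cochordal representation so that $(v_i\to v_j)$ forces $t_i\subseteq t_j$ (replacing $t_i$ by $t_i\cap t_j$ whenever needed, justified by the mixed condition), and then sets $t'_i=t_i\cup\{x_i\}\cup\{x_k: v_k\to v_i\}$ with one new leaf $x_i$ per vertex attached inside $t_i$. You instead \emph{enlarge}: you invoke the Helly property to anchor each leaf $\ell_w$ at a point common to $s_w$ and all $s_v$ with $w\to v$, and take $t_v=\bigcup_{w\preceq v}s_w\cup\{\ell_w:w\preceq v\}$. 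Both routes use the mixed condition in the same spirit (to propagate $E_1$-relations along $\overrightarrow{E_2}$), but the paper buys containment by pruning the base subtrees while you buy it by taking downward unions; your version avoids the iterative shrinking step at the cost of a Helly argument and a slightly longer disjointness check, and the resulting subtrees can be larger than $s_v$ rather than contained in it.
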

\begin{proof}

\textbf{\ref{3covered} $\Rightarrow$ \ref{3mixed}}: 
Let $\mathcal{T} = \{ t_1, \ldots, t_n \}$ be a subtree overlap representation for $G$ in tree $T$ with covering subtree $R$, and suppose that the elements of $\mathcal{T}$ are indexed such that $i<j$ implies $|t_i| \le |t_j|$.
Then the sets $E_1 = \{ v_i v_j ~|~ t_i | t_j \}$ and $\overrightarrow{E_2} = \{ v_i \rightarrow v_j ~|~ (t_i \subseteq t_j) \mbox{ and } i<j \}$
 define a cochordal-mixed partition of the edges of $\overline{G}$.
 Since $R$ is a covering subtree of $\mathcal{T}$, for all $1 \le i,j \le n$, $t_i \cap R$ is a subtree of $R$ and $t_i \cap t_j \cap R = \emptyset$ if and only if $t_i \cap t_j = \emptyset$.
Therefore, subtrees $\{ t_i \cap R ~|~ t_i \in \mathcal{T} \}$ of $R$ form an $R$-cochordal representation of the graph $(V, E_1)$.
 
\textbf{\ref{3mixed} $\Rightarrow$ \ref{3bushy}}: 
This part of the proof combines elements of the proof that complements of cointerval-mixed graphs are interval filament graphs \cite{gavril2000} and the proof that subtree filament graphs are subtree overlap graphs \cite{Jess}. 
Let $G=(V,E)$ be the complement of an $R$-cochordal-mixed graph.
Let $E_1$ and $E_2$ be a partition of the edges of $\overline{G}$ 
such that $(V, E_1)$ is an $R$-cochordal graph
and $ (V, \overrightarrow{E_2})$ is a transitive orientation of $(V, E_2)$ such that
for all $1 \le i,j,k \le n$, if
$(v_i \rightarrow v_j) \in \overrightarrow{E_2}$ and $v_j v_k \in E_1$ then $v_i v_k \in E_1$. 
Let subtrees $\mathcal{T} = \{ t_1, \ldots, t_n \}$ of tree $R = (V_R, E_R)$ be an $R$-cochordal representation of $(V, E_1)$, that is, for all $1 \le i,j \le n$, $v_i v_j \in E_1$ if and only if $t_i | t_j$.
      
Suppose that 
$(v_i \rightarrow v_j) \in \overrightarrow{E_2}$. Then $t_i \cap t_j \ne \emptyset$. 
Furthermore, if $t_i \not\subseteq t_j$, then
replacing $t_i$ with $t_i \cap t_j$ produces another $R$-cochordal representation of $(V, E_1)$, as justified by the following argument from \cite{gavril2000}.
Suppose that $t_i \not\subseteq t_j$.
By the definition of $(V, \overrightarrow{E_2})$, for all $1 \le k \le n$, if $v_i v_k \notin E_1$ then $v_j v_k \notin E_1$. Equivalently, every $t_k \in \mathcal{T}$ that intersects $t_i$ also intersects $t_j$ (as well as $t_i \cap t_j$ since $t_i$, $t_j$, and $t_k$ are all subtrees of a tree).
Therefore, replacing $t_i$ with $t_i \cap t_j$ produces another $R$-cochordal representation of $(V, E_1)$.
Applied repeatedly, this transformation results in an $R$-cochordal representation of $(V, E_1)$ such that for all $1 \le i,j \le n$, $(v_i \rightarrow v_j) \in \overrightarrow{E_2}$ implies $t_i \subseteq t_j$. 
      
Let $T'$ be the tree $R$ with $n$ additional nodes: $x_1, \ldots, x_n \notin V_R$ where, for each $1 \le i \le n$, $x_i$ is adjacent in $T'$ to exactly one arbitrary node of $t_i \cap R$.
Then, for $1 \le i \le n$, let $t'_i = t_i \cup \{x_i\} \cup \{x_k ~|~ (v_k \rightarrow v_i) \in \overrightarrow{E_2} \}$ and let $\mathcal T' = \{ t'_1, \ldots, t'_n \}$. 
Each element of $\mathcal T'$ induces a connected subgraph and therefore a subtree of $T'$.
Suppose there are two elements of $\mathcal T'$, $t'_i$ and $t'_j$ such that $i \ne j$ and $t'_i = t'_j$. Then $x_i$ is in $t'_j$ and $x_j$ is in $t'_i$ which implies that $(v_i \rightarrow v_j)$ and  $(v_j \rightarrow v_i)$ are both in $ \overrightarrow{E_2}$. But this contradicts transitivity since 
$\overline{G}$ is a simple graph. Therefore,
the elements of $\mathcal T'$ are distinct.

For each $1 \le i,j \le n$, $i \ne j$, $v_i v_j$ is in exactly one of $E_1$, $E_2$, or $E$.
Since $\mathcal{T}$ is a cochordal representation of $(V,E_1)$, $v_i v_j \in E_1$ if and only if $t_i | t_j$.
Of the nodes $x_1, \ldots, x_n$, only $x_i$ (respectively $x_j$) and those corresponding to subtrees contained in or equal to $t_i$ (respectively $t_j$) are in $t'_i$ (respectively $t'_j$). Furthermore, $t_i \subset t'_i$ and $t_j \subset t'_j$. Therefore $t_i | t_j$ if and only if  $t'_i | t'_j$ and so $v_i v_j \in E_1$ if and only if $t'_i | t'_j$.
If $v_i v_j \in E_2$ then suppose without loss of generality that $(v_i \rightarrow v_j) \in \overrightarrow{E_2}$. Then, by our earlier argument, $t_i \subseteq t_j$. In addition, every vertex of $t'_i \backslash t_i$ is also in $t'_j$ by transitivity of  $\overrightarrow{E_2}$, and $x_j \in t'_j \backslash t'_i$.
Therefore, 
if $(v_i \rightarrow v_j) \in \overrightarrow{E_2}$ then $t'_i \subset t'_j$, and
if $v_i v_j \in E_2$ then $t'_i \subset t'_j$ or $t'_j \subset t'_i$.
Finally, if $v_i v_j \in E$ then,
since $\mathcal{T}$ is a cochordal representation of $(V,E_1)$, $t_i \cap t_j \ne \emptyset$ and therefore 
$t'_i \cap t'_j \ne \emptyset$. In addition, $x_i \in t'_i \backslash t'_j$ and $x_j \in t'_j \backslash t'_i$.
Thus,  $v_i v_j \in E$ implies $t'_i \between t'_j$. We conclude that the subtrees $\mathcal T'$ of tree $T'$ form a subtree overlap representation of $G$.

Furthermore, since for each $1 \le i \le n$, $t_i \subset t'_i$ and each $x_i$ is adjacent to a vertex of $R$, subtrees $\mathcal{T}'$ of tree $T'$ form an $R$-covered subtree overlap representation of $G$ in which $R$ is bushy.

\textbf{\ref{3bushy} $\Rightarrow$ \ref{3covered}}: Obvious.
\end{proof}

The classes of $K_1$-covered subtree overlap graphs and bushy $K_1$-covered subtree overlap graphs are equivalent to the class of cocomparability graphs. 
This follows from \cite{GolSch} combined with the observation that subtrees of a tree that all have a vertex in common overlap if and only if neither is contained in the other.
Thus, Theorem \ref{th:3parts}
generalises characterisations of cocomparability graphs as the overlap graphs of subtrees of a tree where all subtrees have a vertex in common, the complements of cochordal-mixed graphs where all edges are in the $E_2$ block of the partition, and the overlap graphs of subtrees of a star \cite{eowynStewart, gavril2000, GolSch}.

\begin{theorem}\label{bigThm}
Let $G$ be a graph and $\mathcal{S} \ne \{K_1\}$ be a nonempty set of trees that is closed under edge subdivision. The following statements are equivalent:
\begin{enumerate}
\item $G$ is an $\mathcal{S}$-covered subtree overlap graph. \label{covered}
\item $G$ is the complement of an $\mathcal{S}$-cochordal-mixed graph. \label{mixed}
\item $G$ is a bushy $\mathcal{S}$-covered subtree overlap graph. \label{bushy}
\item $G$ is an $\mathcal{S}$-subtree-filament graph. \label{filament}
\end{enumerate}
\end{theorem}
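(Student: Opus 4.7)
My plan is to reuse Theorem \ref{th:3parts} to collapse statements 1, 2, and 3 onto each other, and to close the cycle via $4 \Leftrightarrow 2$, with closure of $\mathcal{S}$ under edge subdivision doing the real work in the direction $2 \Rightarrow 4$. For the equivalence of statements 1, 2, and 3, each asserts the existence of some tree $R \in \mathcal{S}$ witnessing a corresponding property (an $R$-covered subtree overlap representation, a complement of an $R$-cochordal-mixed graph, or a bushy $R$-covered subtree overlap representation); for any fixed $R$, Theorem \ref{th:3parts} gives the equivalence of the three properties with the same $R$, so the three existential statements coincide.

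For $4 \Rightarrow 2$, given a subtree filament representation of $G$ on host tree $T \in \mathcal{S}$ with subtrees $\mathcal{T} = \{t_1, \ldots, t_n\}$ and filaments $\{f_1, \ldots, f_n\}$, I would set $E_1 = \{v_iv_j : t_i | t_j\}$, so that $(V, E_1)$ is the disjointness graph of $\mathcal{T}$ and hence an $\mathcal{S}$-cochordal graph, and set $E_2 = E(\overline{G}) \setminus E_1$. For each $v_iv_j \in E_2$ the filaments $f_i$ and $f_j$ are disjoint while $t_i \cap t_j \ne \emptyset$, so the subtree-filament condition forces $t_i \subseteq t_j$ or $t_j \subseteq t_i$. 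I would orient $\overrightarrow{E_2}$ by $v_i \to v_j$ whenever $t_i \subsetneq t_j$, breaking ties by index if $t_i = t_j$. Set containment is transitive, so this is a transitive orientation, and the cochordal-mixed property follows: if $(v_i \to v_j) \in \overrightarrow{E_2}$ and $v_j v_k \in E_1$, then $t_i \subseteq t_j$ and $t_j | t_k$ yield $t_i | t_k$, so $v_i v_k \in E_1$.

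For $2 \Rightarrow 4$, let $G = \overline{H}$ with $H$ an $R$-cochordal-mixed graph, $R \in \mathcal{S}$, partition $E_1, E_2$, and transitive orientation $\overrightarrow{E_2}$, and let $\mathcal{T}$ be an $R$-cochordal representation of $(V, E_1)$. Following the proof of Theorem \ref{th:3parts} part $2 \Rightarrow 3$, I would repeatedly replace $t_i$ by $t_i \cap t_j$ whenever $(v_i \to v_j) \in \overrightarrow{E_2}$ to enforce $t_i \subseteq t_j$, without altering the $R$-cochordal representation of $(V, E_1)$. Then I would apply Lemma \ref{lem:bushyGivesNice} to the resulting subtrees in $R$ to produce a subdivision $R'$ of $R$ (which lies in $\mathcal{S}$ by closure under subdivision) and subtrees $\mathcal{T}'$ satisfying Property \ref{A}, with all pairwise relations preserved. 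Finally, I would construct filaments above $R'$ by processing subtrees in nondecreasing order of size and nesting each new filament strictly inside each previously drawn filament whose subtree contains it: pairs of overlapping subtrees share an edge of $R'$ by Property \ref{A}, forcing the filaments above them to cross; disjoint subtrees lie over vertex-disjoint regions of $R'$, so their filaments avoid each other automatically; and containment pairs in $\overrightarrow{E_2}$ are nested, so their filaments do not meet.

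The hardest part will be the topological filament construction in $2 \Rightarrow 4$, specifically verifying the crossing pattern (filaments cross iff subtrees overlap) while keeping the host tree inside $\mathcal{S}$. Property \ref{A}, obtained via Lemma \ref{lem:bushyGivesNice}, is what makes this work: intersecting subtrees share an edge, which forces any two filaments drawn above them to cross, and no two subtrees share a leaf, which makes the filament endpoints unambiguous. Closure of $\mathcal{S}$ under edge subdivision then ensures that the subdivision $R'$ used as host tree remains in $\mathcal{S}$; without this closure the filament equivalence can fail for a single tree, as the $C_4$/$K_2$ example noted just before the theorem illustrates.
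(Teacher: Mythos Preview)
Your overall plan matches the paper's approach: Theorem~\ref{th:3parts} handles $1 \Leftrightarrow 2 \Leftrightarrow 3$, and Lemma~\ref{lem:bushyGivesNice} together with Gavril's filament/mixed correspondence handles $2 \Leftrightarrow 4$. Where the paper simply cites the proof of Theorem~4 of \cite{gavril2000} (observing that it keeps the host tree in one direction and requires only Property~\ref{A} in the other), you instead sketch that construction explicitly, and that is where a real gap appears.

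In your $2 \Rightarrow 4$, the filament construction nests by subtree containment and then checks three cases: overlapping subtrees force a crossing, disjoint subtrees are automatically separated, and $\overrightarrow{E_2}$-containment pairs are nested and hence disjoint. You have omitted a fourth case: pairs $v_iv_j \in E(G)$ for which, after the cap trick and Lemma~\ref{lem:bushyGivesNice}, one of $t'_i, t'_j$ contains the other. Nothing rules this out---the cap trick forces $t_i \subseteq t_j$ for $E_2$-pairs but does not force overlap for $E$-pairs---and for such a pair your nesting would make $f_i$ and $f_j$ disjoint when they must intersect. Gavril's actual construction assigns heights so that a contained filament rises above its container precisely when the pair is an edge of $G$; some such device is essential, and pure nesting by subtree containment does not yield a correct representation. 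A smaller point in $4 \Rightarrow 2$: transitivity of set containment gives $t_i \subseteq t_k$ from $(v_i \to v_j),(v_j \to v_k)\in\overrightarrow{E_2}$, but not yet $v_iv_k \in E_2$; you still need the (short) topological argument that if $f_i$ lies under $f_j$ and $f_j$ under $f_k$, then $f_i$ and $f_k$ do not cross.
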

\begin{proof}

\textbf{\ref{covered} $\Leftrightarrow$ \ref{mixed} $\Leftrightarrow$ \ref{bushy}}: by Theorem \ref{th:3parts}.

\textbf{\ref{filament} $\Leftrightarrow$ \ref{mixed}}: 
By Theorem 4 of \cite{gavril2000}, a graph is a subtree filament graph if and only if it is the complement of a cochordal-mixed graph. 
In the proof of that theorem, a subtree-filament representation of a graph $G=(V,E)$ is transformed to a cochordal representation of the graph $(V,E_1)$ on the same host tree, where $E_1$ and $E_2$ is a cochordal-mixed partition of the edges of $\overline{G}$. Thus \ref{filament} implies \ref{mixed}. 

For the other direction, suppose that $G$ is the complement of an $R$-cochordal-mixed graph where $R \in \mathcal S$, and let $E_1$ and $E_2$ be a cochordal-mixed partition of the edges of $\overline{G}$. By Lemma \ref{lem:bushyGivesNice}, there is a $T$-cochordal representation of $(V,E_1)$ that satisfies Property \ref{A}, such that $T$ is a subdivision of $R$. The proof of Theorem 4 of \cite{gavril2000} transforms a $T$-cochordal representation of $(V,E_1)$ and cochordal-mixed partition $E_1$ and $E_2$ of the edges of $\overline{G}$ to a $T$-subtree-filament representation of $G$, provided that the 
$T$-cochordal representation of $(V,E_1)$ satisfies Property \ref{A}.
Therefore, since $T \in \mathcal S$, we have \ref{mixed} implies \ref{filament}.
\end{proof}

Theorem \ref{bigThm} does not hold for $\mathcal{S} = \{K_1\}$ since only complete graphs are 
$K_1$-subtree-filament graphs while, as previously noted, 
the classes of $K_1$-covered subtree overlap graphs, 
$K_1$-cochordal-mixed graphs, and
bushy $K_1$-covered subtree overlap graphs
are all equivalent to cocomparability graphs.

When $\mathcal{S}$ is the set of subdivisions of $K_2$, Theorem \ref{bigThm} becomes the following characterisation of interval filament graphs, which includes results of \cite{CGO} and \cite{gavril2000}.

\begin{corollary}
The following statements are equivalent for a graph $G$:
$G$ is a path-covered subtree overlap graph;
$G$ is the complement of a cointerval-mixed graph;
$G$ is the overlap graph of subtrees of a caterpillar;
$G$ is an interval filament graph.
\end{corollary}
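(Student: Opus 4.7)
The plan is to apply Theorem~\ref{bigThm} with $\mathcal{S}$ taken to be the set of all subdivisions of $K_2$, that is, the family of paths on at least two vertices. This set is nonempty, is closed under edge subdivision (subdividing an edge of a path yields a longer path), and is not equal to $\{K_1\}$, so the hypotheses of Theorem~\ref{bigThm} are satisfied. That theorem then immediately gives the equivalence of three of the four conditions of the corollary: path-covered subtree overlap graphs, complements of path-cochordal-mixed graphs, and path-subtree-filament graphs.

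Next I would translate these $\mathcal{S}$-named classes into the named classes of the corollary. A subtree of a path is exactly an interval of that path, so a path-cochordal representation is a cointerval representation, and therefore complements of path-cochordal-mixed graphs coincide with complements of cointerval-mixed graphs. Similarly, when the host tree is a path the surface $S$ is a strip and a subtree filament on a subpath is an arc joining its two endpoints above the strip, which matches the definition of an interval filament; hence path-subtree-filament graphs are precisely interval filament graphs.

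What remains is to identify path-covered subtree overlap graphs with overlap graphs of subtrees of a caterpillar. For the forward direction I would apply Theorem~\ref{th:3parts} to obtain a bushy path-covered representation: the host tree then consists of the covering path together with additional leaves attached to its vertices, which is by definition a caterpillar. For the converse, suppose $G$ has a subtree overlap representation $\mathcal{T}$ on a caterpillar $C$ with spine $P$. Any subtree of $C$ containing two or more vertices must include at least one spine vertex, because non-spine leaves of $C$ are pairwise non-adjacent. Thus the only subtrees in $\mathcal{T}$ that can miss $P$ are singletons sitting on a non-spine leaf, and any such $t_i$ corresponds to an isolated vertex of $G$ (a one-vertex set cannot overlap anything). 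For each such $t_i$ I would append a fresh vertex $p'_i$ to one end of the spine and redefine $t_i = \{p'_i\}$; the resulting host tree is still a caterpillar, $t_i$ remains a singleton (hence $v_i$ remains isolated), and the intersection pattern among all other subtrees is unchanged, so the overlap graph is preserved while the extended spine $P \cup \{p'_1, p'_2, \ldots\}$ is now a covering path.

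The main obstacle is the converse just described: one must argue that uncovered subtrees can only be singletons and then reroute them onto a spine extension without disturbing any other overlap relation. The key structural fact that carries the argument through is that non-spine leaves of a caterpillar are pairwise non-adjacent, which forces any connected two-vertex subtree to meet the spine and reduces the problematic case to isolated vertices of $G$.
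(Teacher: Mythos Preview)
Your proposal is correct and follows the same route as the paper: specialize Theorem~\ref{bigThm} to $\mathcal{S}=\{\text{subdivisions of }K_2\}$ and then read off the four classical names. The paper's proof is literally one sentence to that effect; the identifications of path-cochordal with cointerval and of path-subtree-filament with interval filament are treated as definitional, and the caterpillar clause is matched with item~\ref{bushy} of Theorem~\ref{bigThm} (bushy path-covered), with the equivalence ``caterpillar $\Leftrightarrow$ interval filament'' also attributed to \cite{CGO}.

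The one place where you do more work than the paper is the direction ``overlap graph of subtrees of a caterpillar $\Rightarrow$ path-covered''. The paper leaves this implicit, whereas you spell out that any non-singleton subtree of a caterpillar must meet the spine and that singleton subtrees off the spine represent isolated vertices which can be relocated onto a spine extension. That argument is sound; it makes the corollary self-contained rather than relying on \cite{CGO} for that implication.
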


\section{Conclusion}

We have presented two main results: 

\begin{enumerate}
\item
The following graph classes are equivalent for any tree $R$: $R$-covered subtree overlap graphs, the complements of $R$-cochordal-mixed graphs, and bushy $R$-covered subtree overlap graphs.
\item
The following graph classes are equivalent for any nonempty set of trees $\mathcal{S} \ne \{K_1\}$ that is closed under edge subdivision:
$\mathcal{S}$-covered subtree overlap graphs,
the complements of $\mathcal{S}$-cochordal-mixed graphs,
bushy $\mathcal{S}$-covered subtree overlap graphs,
and $\mathcal{S}$-subtree-filament graphs.
\end{enumerate}

The first result is a generalization of characterisations of cocomparability graphs, as can be seen in the simplest case of Theorem \ref{th:3parts}, when $R = K_1$. 
The second result generalises characterisations of interval filament graphs. The simplest case of Theorem \ref{bigThm}, when $\mathcal{S}$ is the set of subdivisions of $K_2$, states that the following graph classes are equivalent: path-covered subtree overlap graphs, the complements of cointerval-mixed graphs, the overlap graphs of subtrees of caterpillars, and interval filament graphs.
The second result suggests that the $\mathcal{S}$-covered subtree overlap graphs, for sets $\mathcal{S}$ of trees closed under edge subdivision, might be a useful way of breaking down the class of subtree overlap graphs. We propose three avenues based on that idea for future study.

While some subclasses of subtree overlap graphs can be recognised in polynomial time (including interval, permutation, cocomparability, chordal, circular arc, and circle graphs), for others the recognition problem is NP-complete (including interval filament graphs \cite{Perm},  overlap graphs of subtrees of a tree with a bounded number of leaves, the overlap graphs of subtrees of subdivisions of a fixed tree with at least three leaves, and the overlap graphs of paths in a tree with bounded maximum degree \cite{JessThesis,PergelThesis}). The complexity of the recognition problem for subtree overlap graphs is open.  
An efficient recognition algorithm that could output subtree overlap representations for yes instances would have significant algorithmic implications since several optimisation problems that are NP-complete in general can be solved efficiently for subtree overlap graphs when a subtree overlap representation is given \cite{eowynStewart, gavril2007, gavril2009, gavril2011, Keil}. Does the recognition problem on $\mathcal{S}$-covered subtree overlap graphs give insight into the recognition problem on subtree overlap graphs as a whole?

Several optimisation problems remain NP-hard on subtree overlap graphs by virtue of hardness results on the subclasses. It would be interesting to explore the possible P vs. NP-complete boundaries for various optimisation problems within the 
$\mathcal{S}$-covered subtree overlap graphs
over sets $\mathcal{S}$ of trees that are closed under edge subdivision.

Parameters of chordal graphs based on their subtree intersection representations include leafage, that is,
the minimum number of leaves in the host tree of a representation \cite{LinMcKeeWest}, and vertex leafage, that is,
the minimum maximum number of leaves of a subtree in a representation \cite{ChaplickStacho}.
How do analogous and other parameters of subtree overlap graphs relate to the $\mathcal{S}$-covered subtree overlap graph classes of this paper?

\section{Acknowledgements}
The authors gratefully acknowledge support from an NSERC Discovery grant, an iCORE ICT Graduate Student Scholarship and a University of Alberta Dissertation Fellowship.

\bibliographystyle{plain}

\end{document}